\documentclass[letterpaper, 10 pt, conference]{ieeeconf}

\IEEEoverridecommandlockouts                              
\usepackage{bbm}
\usepackage[english]{babel}
\usepackage{amsmath,amsfonts,amssymb, amsthm}
\usepackage{graphicx}
\usepackage[colorlinks=true, allcolors=blue]{hyperref}
\usepackage{parskip}
\usepackage{tabularx}
\usepackage{bm}
\usepackage{hhline}
\usepackage{multirow}
\usepackage{algorithm}
\usepackage{algpseudocode}
\usepackage{etoolbox}
\usepackage{subcaption}
\usepackage{algpseudocode}
\usepackage[dvipsnames]{xcolor}
\usepackage{mathtools}
\usepackage{nth}
\usepackage{diagbox}
\usepackage{float}
\usepackage{tabstackengine}
\usepackage{tablefootnote}
\usepackage{dsfont}
\usepackage{wrapfig}

\DeclareMathOperator*{\argmin}{arg\,min}

\patchcmd{\thebibliography} 
  {\settowidth}
  {\setlength{\parsep}{0pt}\setlength{\itemsep}{0pt plus 0.1pt}\settowidth}
  {}{}

\newtheorem{theorem}{Theorem}
\newtheorem{proposition}{Proposition}
\newtheorem{corollary}{Corollary}

\renewcommand{\eqref}[1]{(\ref{#1})}
\renewcommand{\t}{^{\mbox{\tiny\sf T}}} 

\newcommand{\N}{\mathcal{N}}
\newcommand{\KL}{\mathrm{D}_{\mathrm{KL}}}

\newcommand{\D}{\mathcal{D}}

\newcommand{\R}{\mathbb{R}}
\newcommand{\E}{\mathbb{E}}
\newcommand{\tr}{\mathrm{tr}}
\renewcommand{\d}{\mathrm{d}}

\allowdisplaybreaks[3] 

\title{\LARGE \bf{Schr\"odinger Bridges and Density Steering Problems for \\ Gaussian Mixtures Models in Discrete-Time}}
\author{George Rapakoulias, Fengjiao Liu, and Panagiotis Tsiotras}

\begin{document}

\maketitle

\begin{abstract}
    In this work, we revisit the discrete-time Schr\"{o}dinger Bridge (SB) and Density Steering (DS) problems for Gaussian mixture model (GMM) boundary distributions. 
    Building on the existing literature, we construct a set of feasible Markovian policies that transport the initial distribution to the final distribution, and are expressed as mixtures of elementary component-to-component optimal policies.
    We then study the policy optimization within this feasible set in the context of discrete-time SBs and density-steering problems, respectively.
    We show that for minimum-effort density-steering problems, the proposed policy achieves the same control cost as existing approaches in the literature. 
    For discrete-time SB problems, the proposed policy yields a cost smaller than or equal to that in the literature, resulting in a less conservative approximation.
    Finally, we study the continuous-time limit of our proposed discrete-time approach and show that it agrees with recently proposed approximations to the continuous-time SB for GMM boundary distributions.
    We illustrate this new result through two numerical examples.
\end{abstract}

\section{Introduction}

Controlling the density of dynamical systems is a promising and rapidly evolving area of control theory~\cite{chen2021controlling}. 
The goal is to design stochastic processes that steer the probability distribution of a dynamical system from a prescribed initial distribution to a desired terminal one within a finite time horizon. 
Among the most prominent formulations of this problem are Schr\"odinger Bridge problems, which seek the process closest to a given reference dynamics while satisfying marginal distribution constraints. 
SBs have attracted significant attention due to their applicability to generative modeling and machine learning~\cite{chen2022likelihood, shi2023diffusion, rapakoulias2025go}, mean-field and multi-agent control~\cite{chen2018steering, liu2022deep, rapakoulias2025steering}, and autonomy and robotics applications~\cite{pilipovsky2024computationally, saravanos2023distributed, rapakoulias2023discrete, kumagai2025hands, mei2025time}.

Most existing approaches in the generative modeling and mean-field control literature focus on continuous-time formulations. 
These methods are well-suited for high-dimensional distributions that are only available through samples and have led to scalable learning-based algorithms. 
In contrast, many autonomy and control applications naturally operate in discrete-time, allowing them to leverage finite-dimensional convex optimization techniques, such as semidefinite programming~\cite{bakolas2018finite}. 
In this setting, existing approaches often focus on Gaussian distributions and arise in the context of explicitly controlling the statistics of stochastic dynamical systems, such as covariance steering problems. 
Recently, methods capable of handling richer distribution classes, including Gaussian mixture models, have been proposed both in discrete-time~\cite{balci2023density, kumagai2024chance} and in continuous time~\cite{rapakoulias2025go, rapakoulias2025steering, mei2024flow}.

In this work, we argue that a deeper understanding of discrete-time methods capable of handling general, high-dimensional distributions can lead to algorithms with practical advantages in both the control and the generative AI domains.
In generative modeling, discrete-time formulations enable faster inference than continuous-time diffusion-based methods~\cite{gushchin2024adversarial}, by only requiring the traversal of a Markov chain with a small number of time steps, instead of approximately integrating continuous-time stochastic differential equations (SDEs).
In autonomy applications, discrete approaches align more naturally with digital control implementations.
Realizing these benefits, however, requires algorithms capable of handling expressive classes of distributions beyond Gaussian distributions.

In this work, we revisit the discrete-time Gaussian mixture steering algorithm originally proposed by~\cite{balci2023density}, and introduce a modified formulation that bridges discrete-time density steering and continuous-time SB methods.
Our modification yields a Markovian control policy that depends only on the current state, whereas the policy in~\cite{balci2023density} requires memory of a latent random variable sampled at the beginning of the horizon. 
We show that, for minimum-effort density steering problems, the proposed policy achieves the same control cost as~\cite{balci2023density}, while for discrete-time SB problems, the proposed policy gives a cost that is smaller than or equal to that of~\cite{balci2023density}, yielding a less conservative solution.
Moreover, we demonstrate that, in the continuous-time limit, the proposed formulation recovers the policy structure introduced in~\cite{rapakoulias2025go}.
\section{Preliminaries}

\subsection{Discrete-Time Schr\"odinger Bridge}

The discrete-time Schr\"odinger Bridge (dtSB)~\cite{beghi1996relative} is a discrete-time formulation of the well-known Schr\"odinger Bridge problem~\cite{chen2021stochastic, leonard2014survey}.
From a stochastic process perspective, it is an optimization problem in the space of probability measures over discrete-time processes. Formally, given a reference $N$-step Markov chain $q$, the dtSB is the solution of
\begin{equation} \label{dtSB}
    \min_{p \in \D(\rho_0, \rho_N)}  \KL(p \| q) 
\end{equation}
where $\D(\rho_0, \rho_N)$ is a set of processes with fixed marginals $\rho_0, \rho_N$ at time steps $k=0, N$, and $q$ is usually associated with the discrete-time Gaussian random walk $x_{k+1} = x_k +  \sqrt{\epsilon} \, w_k, \,\, w_k \sim \N(0, I)$.

Solving for the optimal process 
\begin{equation*}
   p^* = \argmin_{p \in \D(\rho_0, \rho_N)}  \KL(p \| q) 
\end{equation*}
in \eqref{dtSB} for arbitrary marginal distributions $\rho_0, \rho_N$, is difficult, in general, and therefore computational methods are usually sought.
A state-of-the-art approach for high-dimensional problems is given in ~\cite{gushchin2024adversarial}, which models the transitions of $p$ via conditional Generative Adversarial Networks (GANs) trained using stochastic optimization schemes.

\subsection{Discrete-Time Linear Density Steering}

For many control problems, rather than finding the optimal process that is as close as possible to a reference process, it is useful to study discrete-time minimum control effort processes subject to marginal distribution constraints.
Specifically, instead of optimizing over arbitrary Markov chains, as in \eqref{dtSB}, one solves
\begin{subequations} \label{dtDS}
    \begin{align}
        \min_{u_k}    \quad & \sum_{k=1}^{N-1} \E \left[ \| u_k \|^2 \right], \\
        \textrm{s.t.} \quad & x_{k+1} = A_k x_k + B_k u_k + D_k w_k, \label{dtDS:LTI}\\
        & x_0 \sim \rho_0, \, \quad x_N \sim \rho_N,
    \end{align}
\end{subequations}
where $x_k \in \R^n$, $A_k \in \R^{n \times n}$, $B_k \in \R^{n \times m}$, $D_k \in \R^{n \times \ell}$.
The formulation \eqref{dtDS} has been extensively studied, in linear density control problems such as the covariance steering framework~\cite{chen2015optimal, bakolas2018finite, balci2022exact, rapakoulias2023discrete}, its generalization to problems with Gaussian mixture distributions~\cite{balci2023density, kumagai2024chance}, or in a more abstract setting through the lens of optimal transport \cite{terpin2024dynamic}.

Unlike their continuous-time counterparts, Problems \eqref{dtSB} and \eqref{dtDS} are not equivalent in general, because even in the case where the linear time-invariant prior \eqref{dtDS:LTI} matches the random walk transitions $x_{k+1} = x_k + \sqrt{\epsilon} \, w_k$, the optimal solution of \eqref{dtSB} does not share the same noise intensities as the reference process, i.e., it is not associated with a state space model of the form $x_{k+1} = x_k + u_k(x_k) + \sqrt{\epsilon} \,  w_k$~\cite{beghi1996relative}.

\subsection{Gaussian Discrete-Time Schr\"odinger Bridge}

In the case when $\rho_0 = \N(\mu_0, \Sigma_0), \rho_N = \N(\mu_N, \Sigma_N)$ are Gaussian distributions with the specified means and covariances, and for a reference process $x_{k+1} = x_k + \sqrt{\epsilon} \, w_k$, \eqref{dtSB} can be solved analytically.
Specifically, we can derive the closed-form transition kernels and the marginal distributions of the process $p$ using~\cite{ito2023maximum, liu2022optimal, levy1994discrete, lambert2025lqr}.
\begin{proposition} \label{prp:Gaussian-dtSB}
    Let $\rho_0 = \N(\mu_0, \Sigma_0)$, $\rho_N = \N(\mu_N, \Sigma_N)$ and consider a reference process $q$ associated with $x_{k+1} = x_k + \sqrt{\epsilon} \, w_k$.
    Then, the transition kernels and the marginal distributions of the optimal process $p^*$ that solves the dtSB problem \eqref{dtSB} are given by 
\begin{align}
& p^*_{k+1|k}(x_{k+1} |x_k) = \N(\mu_{k+1|k}, \Sigma_{k+1|k}), \\
& p^*_k = \N(\mu_k, \Sigma_k),
\end{align}
where, 
\begin{align}
    & \mu_k=\left(1-\frac{k}{N}\right)\mu_0+\frac{k}{N}\mu_N, \\
    & \Sigma_k=\big(P_k^{-1}+Q_k^{-1}\big)^{-1}, \\
    & \mu_{k+1|k} = \mu_{k+1}+\big(I_n - \epsilon Q_k^{-1}\big)(x_k-\mu_k), \\
    & \Sigma_{k+1|k} = \epsilon \, Q_{k+1}Q_k^{-1} 
    = \epsilon \big(I_n - \epsilon Q_k^{-1}\big),
\end{align}
and
\begin{align}
& P_k=P_0+k\epsilon I_n ,\qquad Q_k=Q_0-k\epsilon I_n , \label{Qk} \\
& Q_0 \! = \! \epsilon N
\Sigma_0^{\frac{1}{2}}
\Bigg( \!\Sigma_0 \!+\!\frac{\epsilon N}{2}I_n \!- \!\!\Big(\Sigma_0^{\frac{1}{2}}\Sigma_N\Sigma_0^{\frac{1}{2}}+\frac{\epsilon^2 N^2}{4}I_n\Big)^{\frac{1}{2}}
\Bigg)^{-1} \!\!\!\!\! \Sigma_0^{\frac{1}{2}}, \label{Q0} \\
& P_0=\big(\Sigma_0^{-1}-Q_0^{-1}\big)^{-1}. \nonumber
\end{align}
Furthermore, the optimal cost is given by 
\begin{multline*}
    J^{\mathrm{GSB}} = \frac{1}{2} \Big[
    2 \, \tr\big(\Sigma_0 Q_0^{-1}\big) 
    - \ln (\det V)
    + n \ln (\epsilon N)
    - n 
    \\
    + \frac{\|\mu_N - \mu_0\|^2}{\epsilon N}
    + \frac{\tr(\Sigma_N - \Sigma_0)}{\epsilon N}
    \Big],
\end{multline*}
where $V = \Sigma_N - \big(I_n - \epsilon N Q_0^{-1}\big) \Sigma_0 \big(I_n - \epsilon N Q_0^{-1}\big)$. 
\end{proposition}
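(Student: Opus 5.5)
The plan is to reduce the dynamic problem \eqref{dtSB} to a static one over the endpoints, solve that in closed form for Gaussian marginals, and then read off the bridge by Gaussian conditioning. The tool for the first step is the disintegration of the KL divergence over the pair $(x_0,x_N)$:
\[
\KL(p\|q)=\KL(p_{0N}\|q_{0N})+\E_{p_{0N}}\big[\KL(p(\cdot\mid x_0,x_N)\,\|\,q(\cdot\mid x_0,x_N))\big].
\]
The second term vanishes exactly when $p$ reuses the Brownian-bridge conditionals of $q$. Hence $p^*$ is the reciprocal process built from the optimal static coupling $\pi^*$ of $\min_{\pi\in\Pi(\rho_0,\rho_N)}\KL(\pi\|q_{0N})$. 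Here $q_{0N}(x_N\mid x_0)=\N(x_0,\epsilon N I_n)$, so this is an entropic OT problem with quadratic cost $\|x_N-x_0\|^2/(2\epsilon N)$.

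For Gaussian marginals the static optimum is Gaussian, and it is characterized by its cross-covariance $S=\mathrm{Cov}(x_0,x_N)$. I would minimize
\[
\frac{\tr(\Sigma_0+\Sigma_N-2S)+\|\mu_N-\mu_0\|^2}{2\epsilon N}-\tfrac12\ln\det\big(\Sigma_N-S\t\Sigma_0^{-1}S\big)+\text{const}.
\]
The first-order condition gives $S=\Sigma_0(I_n-\epsilon N Q_0^{-1})$ for a symmetric $Q_0$ that solves a quadratic matrix equation. Solving it via the $\Sigma_0^{1/2}$ conjugation trick yields \eqref{Q0}. Substituting back gives $J^{\mathrm{GSB}}$, with $V$ being the conditional covariance $\Sigma_N-S\t\Sigma_0^{-1}S$.

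To get the marginals and transitions, I will use that $p^*$ has the Schr\"odinger-system form $p^*(x_{0:N})=\varphi_0(x_0)\,q(x_{0:N})\,\hat\varphi_N(x_N)$ with Gaussian factors. Propagating forward and backward through the random walk gives $p^*_k\propto \hat\varphi_k\varphi_k$. Both factors are Gaussian: the backward potential has "covariance" $Q_k=Q_0-k\epsilon I_n$, shrinking toward $N$, and the forward one has $P_k=P_0+k\epsilon I_n$. This yields $\Sigma_k=(P_k^{-1}+Q_k^{-1})^{-1}$, and consistency at $k=0$ fixes $P_0=(\Sigma_0^{-1}-Q_0^{-1})^{-1}$. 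The transition is $p^*_{k+1|k}\propto \N(x_{k+1};x_k,\epsilon I_n)\,\varphi_{k+1}(x_{k+1})$, which by completing the square has covariance $(\epsilon^{-1}I_n-Q_{k+1}^{-1})^{-1}=\epsilon Q_{k+1}Q_k^{-1}$. Its gain is $I_n-\epsilon Q_k^{-1}$, which matches the stated formulas. Linearity of the means follows because the mean drift is constant. I would check this against the reference covariance $\epsilon N I_n$ and against endpoint matching $\Sigma_N=(P_N^{-1}+Q_N^{-1})^{-1}$.

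I expect the main obstacle to be the matrix quadratic for $Q_0$. Specifically, I need to select the root that makes $Q_0\succ k\epsilon I_n$ for all $k<N$ and makes $\Sigma_0^{-1}-Q_0^{-1}$ invertible (possibly indefinite $P_0$), and then verify that the resulting $\Sigma_N$ matches. My first attempt would be to set $M=\Sigma_0^{1/2}Q_0^{-1}\Sigma_0^{1/2}$, reduce the optimality condition to $(\Sigma_0-\tfrac{\epsilon N}{2}\cdots)^2=\Sigma_0^{1/2}\Sigma_N\Sigma_0^{1/2}+\tfrac{\epsilon^2N^2}{4}I_n$, and take the positive square root. Simplifying the cost to the displayed form also takes some bookkeeping, using $\tr(S)=\tr(\Sigma_0)-\epsilon N\tr(\Sigma_0Q_0^{-1})$.
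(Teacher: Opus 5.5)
Your route is correct in outline and genuinely different from the paper's. The paper gives no derivation: it specializes Ito and Kashima's closed-form solution (their Theorem~3, stated for maximum-entropy density control of general linear systems) to the random-walk reference. You rebuild the result from scratch. You disintegrate the KL divergence over $(x_0,x_N)$, solve the static Gaussian entropic transport problem in the cross-covariance $S$, and read the marginals and kernels off a Gaussian Schr\"odinger system. The citation buys brevity and generality (arbitrary linear reference dynamics). Your derivation buys transparency. It shows $V=\mathrm{Cov}(x_N\mid x_0)$, with the first-order condition reading $V=\epsilon N\Sigma_0^{-1}S=\epsilon N(I_n-\epsilon NQ_0^{-1})$. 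It also shows $J^{\mathrm{GSB}}=\E_{\rho_0}[\KL(p^*_{N|0}\|q_{N|0})]$, i.e.\ the convention $\KL(p_0\|q_0)=0$. Accordingly your ``const'' must be $\frac n2(\ln(\epsilon N)-1)$; with it, substituting $\tr S=\tr\Sigma_0-\epsilon N\tr(\Sigma_0Q_0^{-1})$ reproduces the displayed cost exactly. Two standard justifications are missing and should be added. For fixed second moments a Gaussian coupling maximizes entropy, so the static optimum is Gaussian. The objective is strictly convex in $S$ on the convex set where $V\succ0$ and blows up at its boundary, so the stationary point is the unique minimizer.

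Three details need repair.

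First, completing the square in $\N(x_{k+1};x_k,\epsilon I_n)\varphi_{k+1}(x_{k+1})$ gives precision $\epsilon^{-1}I_n+Q_{k+1}^{-1}$, and $(\epsilon^{-1}I_n+Q_{k+1}^{-1})^{-1}=\epsilon Q_{k+1}(Q_{k+1}+\epsilon I_n)^{-1}=\epsilon Q_{k+1}Q_k^{-1}$. Your expression with a minus sign equals $\epsilon Q_{k+1}Q_{k+2}^{-1}$ instead. Your product formula also attaches $\varphi$ to $x_0$ and $\hat\varphi$ to $x_N$, the reverse of how you use them afterwards.

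Second, the requirement $Q_0\succ k\epsilon I_n$ is false in general. In the scalar case $Q_0<0$ whenever $\Sigma_N>\Sigma_0+\epsilon N$, and symmetrically $P_0<0$ whenever $\Sigma_N<\Sigma_0-\epsilon N$. Trying to enforce that condition would stall the proof. The correct admissibility condition is $V\succ0$, i.e.\ $Y:=\Sigma_0^{1/2}(I_n-\epsilon NQ_0^{-1})\Sigma_0^{1/2}\succ0$. Since $(Y+\frac{\epsilon N}{2}I_n)^2=\Sigma_0^{1/2}\Sigma_N\Sigma_0^{1/2}+\frac{\epsilon^2N^2}{4}I_n$, this forces the positive semidefinite root. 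Conversely, that root gives $Y\succ0$ by operator monotonicity of the square root, so there is nothing to select. The condition $I_n-\epsilon NQ_0^{-1}\succ0$ means every eigenvalue of $Q_0$ exceeds $\epsilon N$ or is negative. That is exactly what keeps each $Q_k$, $0\le k\le N$, invertible and the backward Gaussian convolutions finite. Invertibility of $\Sigma_0^{-1}-Q_0^{-1}$ and of the matrix inverted in \eqref{Q0} cannot be arranged by root choice; it is a nondegeneracy assumption implicit in the statement.

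Third, you must show that the $Q_0$ defined through $S$ coincides with the covariance parameter of the backward potential. The first-order condition $\Sigma_0^{-1}SV^{-1}=\frac{1}{\epsilon N}I_n$ says the bilinear term $x_0^{\top}x_N$ has the same coefficient $\frac{1}{\epsilon N}$ in $\log\pi^*$ as in $\log q_{0N}$. Hence $\pi^*/q_{0N}$ factorizes into Gaussian functions of $x_0$ and of $x_N$, which is your Schr\"odinger system. With a terminal factor of covariance $Q_N=Q_0-\epsilon NI_n$, the conditional $p^*(x_N\mid x_0)$ has gain $Q_NQ_0^{-1}=I_n-\epsilon NQ_0^{-1}=\Sigma_0^{-1}S$, which closes the link.
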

\begin{proof}
The proof follows from the closed-form solution provided in~\cite[Theorem 3]{ito2023maximum}. 
\end{proof}

\subsection{Gaussian Discrete-Time Linear Density Steering} Unlike \eqref{dtSB}, even in the case where $\rho_0, \rho_N $ are Gaussian, Problem \eqref{dtDS} cannot be solved analytically. 
Its solution, however, can be characterized by a pair of discrete-time coupled Riccati equations, which can be solved numerically~\cite{liu2022optimal}, or through a convex semidefinite program~\cite{rapakoulias2023discrete}.
Since both methods are well-established in the literature, we omit including them here due to space limitations. 


\section{Schr\"odinger Bridge for Mixture Distributions}

Consider two mixture distributions
\begin{equation} \label{finite_mix}
\rho_0(x_0) = \sum_{i=1}^{N_1} \alpha^i \mu^i(x_0), \quad \rho_N(x_N) = \sum_{j=1}^{N_2} \beta^j \nu^j(x_N),
\end{equation}
where $\{\mu^i, \alpha^i\}_{i=1}^{N_1}, \{\nu^j, \beta^j\}_{j=1}^{N_2}$ are two sets of component-weight pairs, respectively.

The goal of this section is to approximate the solution of the discrete-time SB problem \eqref{dtSB} for the mixture distributions \eqref{finite_mix} and a reference process $q$, i.e., approximate the solution of
\eqref{dtSB}
%
given access to a family of Markov chains $p^{ij}$, each solving a component $i$ to component $j$ discrete-time SB problem. 
Specifically, let the family of $p^{ij}$ be Markov chains with $N$ total time steps, parameterized by the indices $(i,j) \in \{1, \dots, N_1\} \times \{1, \dots, N_2\}$. Denote their joint probabilities by 
\begin{equation}
    p^{ij}(x_0, \dots, x_N) = p^{ij}_0(x_0) \prod_{k=0}^{N-1} p^{ij}_{k+1|k} (x_{k+1}|x_k),
\end{equation}
and assume that $p_0^{ij} = \mu^{i}, \; p_N^{ij} = \nu^j$, for all $i,j$, i.e., each Markov chain $p^{ij}$ transports samples from $\mu^i$ to $\nu^j$. 
So far, $p^{ij}$ does not necessarily have to be the optimal component-to-component SB solution, but rather a feasible bridge between $\mu^i$ and $\nu^j$.  
As we will see later, choosing $p^{ij}$ as the component-to-component SB is optimal within a certain class of policies.

One way of constructing a Markov chain $p$ that transports samples from the mixture $\rho_0$ to the mixture $\rho_N$, as defined in equation \eqref{finite_mix}, given the individual family of Markov chains $p^{ij}$ transporting $\mu^i$ to $\nu^j$, is through the following theorem. 
\vspace{2mm}
\begin{theorem} \label{thm:mixing}
Let $p^{ij}(x_0, \dots, x_N)$, where $i \in \{1, \dots, N_1\}$ and $j \in \{1, \dots, N_2\}$, be a family of $N$-step Markov chains with $p_0^{ij} = \mu^{i}, \; p_N^{ij} = \nu^j$, for all $i,j$, and define the scalar mixing weights $\lambda_{ij} \geq 0$ such that $\sum_{i} \lambda_{ij} = \beta^j$ and $\sum_{j} \lambda_{ij} = \alpha^i$.
Then, starting from $x_0 \sim \rho_0$, and following the transition probabilities
\begin{equation}
    p_{k+1|k}(x_{k+1}|x_k) = \sum_{ij} p^{ij}_{k+1|k} (x_{k+1} |x_k) \frac{\lambda_{ij} p^{ij}_k(x_k)}{p_k(x_k)}, \label{mtrans}
\end{equation}
where, 
\begin{equation}
    p_k(x_k) = \sum_{ij} \lambda_{ij} p^{ij}_{k} (x_k), \quad
    k = 0, \dots, N-1,
    \label{mflow}
\end{equation}
the resulting distributions for $x_k$ will be given by equation \eqref{mflow}. 
\end{theorem}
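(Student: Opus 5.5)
We argue by induction on $k$ that the marginal of $x_k$ under the chain started at $\rho_0$ with kernels \eqref{mtrans} equals $p_k$ in \eqref{mflow}. Both the base case and the inductive step reduce to the defining properties of $p^{ij}$ and the mixing weights $\lambda_{ij}$.

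\emph{Base case.} We use only the row constraint $\sum_j \lambda_{ij} = \alpha^i$ and $p^{ij}_0 = \mu^i$:
\begin{equation*}
p_0(x_0) = \sum_{ij} \lambda_{ij}\, p^{ij}_0(x_0) = \sum_i \Big(\sum_j \lambda_{ij}\Big) \mu^i(x_0) = \sum_i \alpha^i \mu^i(x_0) = \rho_0(x_0).
\end{equation*}
Hence the initial law $x_0 \sim \rho_0$ coincides with \eqref{mflow} at $k=0$.

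\emph{Inductive step.} Assume $x_k \sim p_k$ with $p_k$ as in \eqref{mflow}. The marginal of $x_{k+1}$ is obtained from the Chapman--Kolmogorov relation. Substituting \eqref{mtrans}, the factor $p_k(x_k)$ in the kernel cancels against the marginal density:
\begin{equation*}
\int p_{k+1|k}(x_{k+1}|x_k)\, p_k(x_k)\, \d x_k = \sum_{ij} \lambda_{ij} \int p^{ij}_{k+1|k}(x_{k+1}|x_k)\, p^{ij}_k(x_k)\, \d x_k .
\end{equation*}
We may exchange the finite sum and the integral freely. Since each $p^{ij}$ is a Markov chain, the inner integral equals its own marginal $p^{ij}_{k+1}(x_{k+1})$, so the result is $\sum_{ij}\lambda_{ij} p^{ij}_{k+1}(x_{k+1})$. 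This is \eqref{mflow} at $k+1$, which closes the induction. The same computation at the last step $k = N-1$ gives the time-$N$ marginal. Using $p^{ij}_N = \nu^j$ and the column constraint $\sum_i \lambda_{ij} = \beta^j$, we obtain $\sum_j \beta^j \nu^j = \rho_N$, so the terminal constraint is also met.

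\emph{Main obstacle.} The only delicate point is that \eqref{mtrans} must define a genuine transition kernel. First, division by $p_k(x_k)$ must make sense. We will restrict attention to $x_k$ with $p_k(x_k) > 0$; elsewhere the kernel can be defined arbitrarily, since that set has $p_k$-measure zero and does not affect the integral above. For Gaussian components this set is empty anyway. Second, the kernel must integrate to one in $x_{k+1}$. Each $p^{ij}_{k+1|k}$ integrates to one, so this reduces to the weights $\lambda_{ij} p^{ij}_k(x_k)/p_k(x_k)$ being nonnegative and summing to one over $(i,j)$, which is immediate from \eqref{mflow}. With these two checks, the resulting chain is Markovian and the computation above is rigorous.
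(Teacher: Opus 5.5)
Your proof is correct and follows the paper's argument: induction on $k$ via the marginalization identity $p_{k+1}(x_{k+1})=\int p_{k+1|k}(x_{k+1}|x_k)\,p_k(x_k)\,\d x_k$, with the base case $p_0=\rho_0$ coming from the row-sum constraint $\sum_j \lambda_{ij}=\alpha^i$. Your additional checks are correct details that the paper leaves implicit: that the kernel is well defined off the $p_k$-null set $\{p_k=0\}$, that it normalizes, and that the time-$N$ marginal equals $\rho_N$ via the column constraint.
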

%
\begin{proof}
    It is a simple consequence of marginalization, that $p_{k+1}(x_{k+1}) = \int p_{k+1|k}(x_{k+1}|x_k) p_k(x_k) \, \d x_k$. Using this, it is easy to show that if $p_{k} = \sum_{ij}\lambda_{ij} p^{ij}_k$ holds for some time step $k$, then $p_{k+1} = \sum_{ij}\lambda_{ij} p^{ij}_{k+1}$. The result follows by recursively applying this formula, and observing that $p_0 = \rho_0$ due to the constraints on $\lambda_{ij}$. 
\end{proof}
Given the family of Markov processes and transition kernels in equations \eqref{mflow}, \eqref{mtrans}, parameterized by $\lambda_{ij}$ and $p^{ij}$, we are interested in the one that induces the smallest cost in the sense of Problem \eqref{dtSB}.
Substituting the transition probabilities in \eqref{dtSB} and carrying out some calculations, we arrive at the following theorem. 
\begin{theorem} \label{thm:optimality}
    Consider the family of Markov processes $p$, parameterized by $p^{ij}$ and $\lambda_{ij}$ satisfying the constraints of Theorem \ref{thm:mixing}, and with transition probabilities given by \eqref{mtrans}.
    Let $J_{ij}^{\mathrm{SB}}$ be the optimal SB cost of solving \eqref{dtSB} from the $i$-th component of the initial mixture to the $j$-th component of the terminal mixture for a reference process $q$. 
    Then, the KL divergence between $p$ and the reference process $q$ is upper bounded by
    \begin{equation} \label{KL_UB}
        \KL(p \|q) \leq \sum_{ij} \lambda_{ij} J_{ij}^{\mathrm{SB}}.
    \end{equation}
\end{theorem}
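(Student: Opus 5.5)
The plan is to split the KL divergence of the Markov process $p$ into an initial-marginal term plus a sum of per-step conditional KL terms. Then each per-step term is bounded using joint convexity of KL, applied to the mixture form of the transition kernel \eqref{mtrans}. Both $p$ and $q$ are Markov, so the chain rule gives
\begin{equation*}
\KL(p\|q) = \KL(p_0\|q_0) + \sum_{k=0}^{N-1} \E_{x_k\sim p_k}\Big[\KL\big(p_{k+1|k}(\cdot|x_k)\,\|\,q_{k+1|k}(\cdot|x_k)\big)\Big].
\end{equation*}
The same decomposition holds for each $p^{ij}$. Because $p^{ij}$ is taken to be the optimal component-to-component bridge, we get $J^{\mathrm{SB}}_{ij} = \KL(p^{ij}\|q) = \KL(\mu^i\|q_0) + \sum_k \E_{p^{ij}_k}[\KL(p^{ij}_{k+1|k}\|q_{k+1|k})]$. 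I will state the bound for this choice. For a general feasible family, the same argument gives the bound with $\KL(p^{ij}\|q)$ in place of $J^{\mathrm{SB}}_{ij}$.

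\textbf{Per-step terms.} Fix $x_k$ and define weights $w_{ij}(x_k) = \lambda_{ij}p^{ij}_k(x_k)/p_k(x_k)$. By \eqref{mflow} these are nonnegative and sum to one. By \eqref{mtrans}, $p_{k+1|k}(\cdot|x_k) = \sum_{ij} w_{ij}(x_k)\,p^{ij}_{k+1|k}(\cdot|x_k)$, and trivially $q_{k+1|k}(\cdot|x_k) = \sum_{ij} w_{ij}(x_k)\,q_{k+1|k}(\cdot|x_k)$. Joint convexity of KL (equivalently, the log-sum inequality applied pointwise in $x_{k+1}$) then gives
\begin{equation*}
\KL\big(p_{k+1|k}(\cdot|x_k)\|q_{k+1|k}(\cdot|x_k)\big) \le \sum_{ij} w_{ij}(x_k)\,\KL\big(p^{ij}_{k+1|k}(\cdot|x_k)\|q_{k+1|k}(\cdot|x_k)\big).
\end{equation*}
Integrating against $p_k(x_k)$ cancels the denominator $p_k$, leaving $\sum_{ij}\lambda_{ij}\E_{p^{ij}_k}[\KL(p^{ij}_{k+1|k}\|q_{k+1|k})]$.

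\textbf{Initial term.} By Theorem \ref{thm:mixing}, $p_0 = \rho_0 = \sum_{ij}\lambda_{ij}\mu^i$. Applying convexity once more, $\KL(p_0\|q_0) \le \sum_{ij}\lambda_{ij}\KL(\mu^i\|q_0)$. If the reference process has no prescribed initial law, as with the random walk, this term is identical on both sides and can be dropped consistently.

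\textbf{Assembly.} Summing the initial bound and the per-step bounds over $k$ and regrouping by $(i,j)$ gives $\sum_{ij}\lambda_{ij}\KL(p^{ij}\|q) = \sum_{ij}\lambda_{ij}J^{\mathrm{SB}}_{ij}$, which is \eqref{KL_UB}.

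The main obstacle is handling the $x_k$-dependent mixing weights rigorously. Convexity has to be applied conditionally for each fixed $x_k$, using the fact that the $w_{ij}(x_k)$ form a probability vector. Then one must check that the outer expectation under $p_k$ exactly undoes the normalization by $p_k$. Minor measure-theoretic care is also needed: on sets where $p_k = 0$ the weights are undefined, but those sets have zero $p_k$-measure and contribute nothing to the integral.
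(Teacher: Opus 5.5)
Your proposal is correct and follows essentially the same route as the paper. Both use the Markov chain-rule decomposition of $\KL(p\|q)$, apply convexity of KL to the mixture kernel \eqref{mtrans} with the $x_k$-dependent weights, cancel $p_k$ under the outer expectation, regroup into $\sum_{ij}\lambda_{ij}\KL(p^{ij}\|q)$, and identify this with $\sum_{ij}\lambda_{ij}J^{\mathrm{SB}}_{ij}$ once $p^{ij}$ is the component SB. Your explicit convexity bound on the initial term, $\KL(p_0\|q_0)\le\sum_{ij}\lambda_{ij}\KL(\mu^i\|q_0)$, is a small improvement, since the paper simply drops that term in its inequality step.
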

\vspace{-5mm}
\begin{proof}
We first consider the following upper bound
    \begin{align}
         \KL(p \|q)  &= \KL(p_0 \| q_0) + \sum_k \E_{x_k} \left[ \KL(p_{k+1|k} \| q_{k+1|k}) \right] \nonumber \\
         &  \leq \sum_k \E_{x_k} \left[ \sum_{i,j} \frac{\lambda_{ij} p_k^{ij}}{p_k} \KL(p^{ij}_{k+1|k} \| q_{k+1|k}) \right]  \label{pf2:second_line} \\
        &  = \sum_{k,i,j} \int \lambda_{ij} p_k^{ij} \KL(p^{ij}_{k+1|k} \| q_{k+1|k}) \, \d x_k \nonumber \\
        & = \sum_{i,j} \lambda_{ij} \sum_{k} \E_{p_k^{ij}} \left[ \KL(p^{ij}_{k+1|k} \| q_{k+1|k}) \right] \nonumber \\
        & = \sum_{i,j} \lambda_{ij} \KL(p^{ij} \| q), \label{pf2:last_line}
    \end{align}
    where the inequality in \eqref{pf2:second_line} follows from the convexity of the KL divergence~\cite[Theorem 2.7.2]{cover2012elements}. 
    Since $\lambda_{ij} \geq 0$  in \eqref{pf2:last_line}, the upper bound is minimized by selecting $p^{ij}$ to be the process solving the ($i$-$j$)-SB, yielding the desired result.
\end{proof}
\vspace{-3mm}
Using the upper bound of Theorem \ref{thm:optimality} as a proxy for the loss function \eqref{dtSB}, we can optimize for $\lambda_{ij}$ to obtain an approximation of the discrete-time Schr\"odinger Bridge by solving 
\begin{equation*}
    \min_{\lambda_{ij} \geq0} \sum_{ij} \lambda_{ij} J^{\mathrm{SB}}_{ij} \, ; \,\, \mathrm{s.t.} \,\, \Big\{ \sum_{i} \lambda_{ij} = \beta^j,  \sum_{j} \lambda_{ij} = \alpha^i \Big\}. 
\end{equation*}
\section{Discrete-Time Density Steering}
The goal of this section is to solve problems of the form 
\begin{subequations} \label{GMM_dtDS}
    \begin{align} 
        \min_{u_k}    \quad & J= \E \Big[\sum_{k=1}^{N-1} \| u_k \|^2 \Big], \label{GMM_dtDS:cost}\\
        \textrm{s.t.} \quad & x_{k+1} = A_k x_k + B_k u_k + D_k w_k, \label{GMM_dtDS:dyn} \\
        & x_0 \sim \rho_0, \quad x_N \sim \rho_N, \label{GMM_dtDS:BC}
    \end{align}
\end{subequations}
where $\rho_0, \rho_N$ are mixtures of the form \eqref{finite_mix}, given a family of component-wise solutions. 
To this end, let $u_{k|ij}$ be a policy solving \eqref{GMM_dtDS:cost}, \eqref{GMM_dtDS:dyn} with boundary distributions the $i$-th component of the initial mixture and the $j$-th component of the terminal mixture. 
Denoting the Markov chain associated with the state space model \eqref{GMM_dtDS:dyn} for $u=u_{t|ij}$ with $p^{ij}$, the corresponding $k$-th marginal and transition kernels by $p^{ij}_k, p^{ij}_{k+1|k}$, respectively, and using Theorem \ref{thm:mixing}, the transition function \eqref{mtrans} will transport the GMM $\rho_0$ to $\rho_N$.
Using Bayes rule, we translate the statement about mixing the transition function in \eqref{mtrans}, to mixing the control policies $u_{t|ij}$, in the following corollary.

\begin{corollary}
        Starting from $x_0 \sim \rho_0$, and following, for each $k = 0, \dots, N-1$, 
        the random policy
    \begin{equation} \label{random_pol}
        u_k = u_k^{ij}(x_k) \quad \mathrm{w.p.} \quad \frac{\lambda_{ij} p^{ij}_k(x_k)}{p_k(x_k)},
    \end{equation}
    the resulting distributions for $x_k$ will be given by equation \eqref{mflow}.
\end{corollary}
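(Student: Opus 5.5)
The plan is to reduce the corollary to Theorem \ref{thm:mixing}. It suffices to show that the randomized policy \eqref{random_pol}, substituted into the dynamics \eqref{GMM_dtDS:dyn}, generates exactly the transition kernel \eqref{mtrans}. Once that is established, the marginals follow from Theorem \ref{thm:mixing} without further work.

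\textbf{Step 1: the component kernels.} Fix $k$ and $x_k$. For each pair $(i,j)$, the closed-loop system
\[
x_{k+1} = A_k x_k + B_k u_k^{ij}(x_k) + D_k w_k
\]
with $w_k$ independent of $x_k$ has, by definition, the transition kernel $p^{ij}_{k+1|k}(x_{k+1}|x_k)$ of the chain $p^{ij}$. If $u_k^{ij}$ is itself randomized or depends on extra randomness, the same statement holds after integrating that randomness out.

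\textbf{Step 2: the mixed kernel.} Introduce a latent index $(I,J)$, drawn at step $k$ conditionally on $x_k$ with probabilities
\[
\pi_{ij}(x_k) = \frac{\lambda_{ij} p^{ij}_k(x_k)}{p_k(x_k)},
\]
independently of $w_k$. These are valid probabilities: they are nonnegative and, by \eqref{mflow}, sum to one wherever $p_k(x_k)>0$. Conditioning on $(I,J)$ and applying the law of total probability gives
\[
p(x_{k+1}|x_k) = \sum_{ij} \pi_{ij}(x_k)\, p^{ij}_{k+1|k}(x_{k+1}|x_k),
\]
which is exactly \eqref{mtrans}. Because the index is resampled at every step from a distribution depending only on $x_k$, the resulting process is Markov with this kernel.

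\textbf{Step 3: apply Theorem \ref{thm:mixing}.} Since $x_0\sim\rho_0=\sum_{ij}\lambda_{ij}\mu^i = p_0$, Theorem \ref{thm:mixing} yields $x_k\sim p_k$ as in \eqref{mflow} for all $k$. Then $p_N=\sum_{ij}\lambda_{ij}\nu^j=\rho_N$ by the same induction step carried to $k=N$.

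\textbf{Main obstacle.} The only delicate point is Step 2. The selection randomness has to be independent of the process noise $w_k$ and must be redrawn at each step given only $x_k$, so that conditioning on $(I,J)$ really produces the component kernel and the Markov property holds. A secondary technicality is that the sets where $p_k(x_k)=0$ are null and can be ignored.
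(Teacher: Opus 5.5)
Your proposal is correct and follows the paper's proof. The paper likewise reduces the corollary to Theorem~\ref{thm:mixing} by showing, via the law of total probability, that the randomized policy induces exactly the kernel \eqref{mtrans}; it writes this as an integral of $p(x_{k+1}|x_k,u_k)$ against $\sum_{ij}\delta\big(u_k-u_k^{ij}(x_k)\big)\lambda_{ij}p_k^{ij}(x_k)/p_k(x_k)$. Conditioning on a latent index $(I,J)$ rather than on the control value is only a cosmetic difference, and your remarks on independence from $w_k$, the Markov property, and the null set $\{p_k=0\}$ spell out points the paper leaves implicit.
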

\begin{proof}
It suffices to show that using policy \eqref{random_pol} on system \eqref{GMM_dtDS:dyn} results in the state transition probability given by \eqref{mtrans}.

By Bayes rule, we know that 
\begin{equation} \label{bayes_rule_for_u}
    p_{k+1|k}(x_{k+1}|x_k)=\int p(x_{k+1}|x_k, u_k) p(u_k|x_k) \, \d u_k. 
\end{equation}
By substituting the control distribution \eqref{random_pol} in the RHS of \eqref{bayes_rule_for_u}, we obtain
\begin{align}
    & \int p(x_{k+1}|x_k, u_k) p(u_k|x_k) \, \d u_k \nonumber \\
    & = \int p(x_{k+1}|x_k, u_k) \sum_{i,j} \delta\big(u_k - u^{ij}_k(x_k) \big) \frac{\lambda_{ij} p^{ij}_k(x_k)}{p_k(x_k)} \, \d u_k \nonumber \\
    & = \sum_{i,j} \int p(x_{k+1}|x_k, u_k)  \delta\big(u_k - u^{ij}_k(x_k) \big) \, \d u_k \frac{\lambda_{ij} p^{ij}_k(x_k)}{p_k(x_k)}\nonumber \\
    & = \sum_{i,j} p^{ij}(x_{k+1}|x_k) \frac{\lambda_{ij} p^{ij}_k(x_k)}{p_k(x_k)}. \label{rhs}
\end{align}
Equation \eqref{rhs} matches \eqref{mtrans}, which guarantees that the state distribution will evolve according to \eqref{mflow}. This concludes the proof. 
\end{proof}

Regarding the optimization over $\lambda_{ij}$ for density steering problems, we provide the following result: 
\begin{theorem} \label{optimality_DS}
Let $J_{ij}^{\mathrm{DS}}$ be the cost of the ($i$-$j$)-th conditional policy $u_{t|ij}$, i.e., the cost \eqref{GMM_dtDS:cost} from the $i$-th component of the initial to the $j$-th component of the terminal mixtures \eqref{GMM_dtDS:BC}.
The control cost \eqref{GMM_dtDS:cost} for the density steering problem \eqref{GMM_dtDS} is given by 
\begin{equation} \label{cost_DS}
          J = \sum_{ij} \lambda_{ij} J_{ij}^{\mathrm{DS}}.
\end{equation}
\end{theorem}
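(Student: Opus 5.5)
We plan to compute the expected cost of the random policy \eqref{random_pol} stage by stage, using the tower property to condition first on $x_k$ and then on the randomly selected index pair $(i,j)$. Given $x_k$, the policy selects $u_k = u_k^{ij}(x_k)$ with probability $\lambda_{ij} p^{ij}_k(x_k)/p_k(x_k)$. Hence
\begin{equation*}
\E\big[\|u_k\|^2 \,\big|\, x_k\big] = \sum_{i,j} \frac{\lambda_{ij} p^{ij}_k(x_k)}{p_k(x_k)} \, \|u_k^{ij}(x_k)\|^2 .
\end{equation*}
Unlike the KL bound in Theorem \ref{thm:optimality}, no convexity inequality is needed here. The cost is linear in the law of $u_k$ given $x_k$, so this step is an exact identity.

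Next we take the expectation over $x_k$. By the Corollary (equivalently Theorem \ref{thm:mixing}), $x_k$ is distributed according to $p_k = \sum_{ij}\lambda_{ij}p^{ij}_k$ from \eqref{mflow}. The factor $p_k(x_k)$ in the denominator therefore cancels against the density of $x_k$, which gives
\begin{equation*}
\E\big[\|u_k\|^2\big] = \sum_{i,j} \lambda_{ij} \int p^{ij}_k(x_k)\, \|u_k^{ij}(x_k)\|^2 \, \d x_k = \sum_{i,j}\lambda_{ij}\, \E_{p^{ij}_k}\big[\|u_k^{ij}(x_k)\|^2\big].
\end{equation*}
The last expectation is exactly the stage-$k$ cost of the $(i$-$j)$ component policy run on its own chain $p^{ij}$, whose marginal at time $k$ is $p^{ij}_k$. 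Summing over $k$ and exchanging the finite sums yields $J=\sum_{ij}\lambda_{ij}\sum_k \E_{p^{ij}_k}[\|u^{ij}_k\|^2] = \sum_{ij}\lambda_{ij}J^{\mathrm{DS}}_{ij}$, which is \eqref{cost_DS}.

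The main point to handle with care is the interpretation of $u^{ij}_k$. The argument needs the component policy to be a deterministic Markov feedback $u^{ij}_k(x_k)$, so that its cost depends only on the marginal $p^{ij}_k$. This holds, for instance, for the affine feedback laws of covariance steering. If the component policies have additional randomness, or carry a feedforward term tied to a latent state, we would condition on that as well. The same cancellation of $p_k$ still applies provided the conditional cost given $x_k$ is a function of $x_k$ alone. A second minor issue is that $p_k(x_k)>0$ wherever any $p^{ij}_k(x_k)>0$, so the weights are well defined. This holds automatically for Gaussian components, and sets of measure zero do not affect the integrals. Finally, the index range $k=1,\dots,N-1$ in \eqref{GMM_dtDS:cost} should be matched with the same range in the definition of $J^{\mathrm{DS}}_{ij}$, which is immediate.
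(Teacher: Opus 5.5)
Your proposal is correct and follows essentially the same route as the paper: condition on $x_k$, use the selection probabilities $\lambda_{ij}p^{ij}_k(x_k)/p_k(x_k)$ from \eqref{random_pol}, cancel $p_k$ against the marginal \eqref{mflow}, and sum over $k$ to get $\sum_{ij}\lambda_{ij}J^{\mathrm{DS}}_{ij}$. Your side remarks make explicit two assumptions the paper uses only implicitly: that $u^{ij}_k$ is a Markov state feedback, so that $J^{\mathrm{DS}}_{ij}$ depends only on the marginals $p^{ij}_k$, and that the summation range in $k$ must match between $J$ and $J^{\mathrm{DS}}_{ij}$.
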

\begin{proof}
We can write the cost \eqref{GMM_dtDS:cost} as
\begin{align*}
    J & = \sum_k \E_{p_k(x_k)
    } \left[\E_{p(u_k|x_k)} \left[  \|u_k \|^2 \right] \right] \\
      & = \sum_{i,j,k} \! \int \!\! p_k(x_k) \frac{\lambda_{ij} p^{ij}_k(x_k)}{p_k(x_k)} \!\!\! \int \!\! \delta \big(u_k \! - \! u_k^{ij}(x_k) \big) \| u_k \|^2  \d u_k  \d x_k \\
      & =  \sum_{i,j,k}   \int   \lambda_{ij} p^{ij}_k(x_k)  \| u^{ij}_k(x_k) \|^2 \, \d x_k \\
      & = \sum_{ij} \lambda_{ij} J_{ij}^{\mathrm{DS}},
\end{align*}
which completes the proof of \eqref{cost_DS}. 
\end{proof}
In contrast to the discrete-time SB problem, which is optimized approximately using the upper bound of and Theorem~\ref{thm:optimality}, Theorem~\ref{optimality_DS} allows for exact calculation of the optimal policy of the discrete-time density steering problems, within the family parameterized by $\lambda_{ij}$, by solving 
\begin{equation*}
    \min_{\lambda_{ij} \geq0} \sum_{ij} \lambda_{ij} J^{\mathrm{DS}}_{ij} \, ; \,\, \mathrm{s.t.} \,\, \Big\{ \sum_{i} \lambda_{ij} = \beta^j,  \sum_{j} \lambda_{ij} = \alpha^i \Big\}. 
\end{equation*}

\section{Connections with Existing Literature}
\subsection{Randomizing Once vs Randomizing at Every Step}
A similar ``randomized-mixing'' approach for constructing processes that steer between GMMs has been proposed in~\cite{balci2023density, balci2025constrained}, and was further explored in the context of chance constraints~\cite{kumagai2024chance} and distributionally robust control~\cite{li2026distributionally}. 
The difference between the approach in~\cite{balci2023density} and the proposed approach is that the randomization in~\cite{balci2023density} occurs once, at the beginning of the control horizon, as opposed to policy \eqref{mtrans}, where the randomization occurs at every time step.
Although both methods provide feasible solutions to either the density steering or the SB problem, randomizing at every time step yields a Markovian policy, whereas randomizing once at the beginning yields a process with memory.

Except for different properties of the resulting processes due to the Markovian property, it is easy to see that both processes share the same marginal distributions for every time step $k$, and result in the same control cost for the density steering problems (see Theorem \ref{optimality_DS} as opposed to~\cite[Theorem 1]{balci2023density}). 
In the context of SBs, the Markov process parameterized by the transitions \eqref{mtrans} yields a cost that is smaller than or equal to the cost of the policy in~\cite{balci2023density}.
Below, we prove this result:

In the context of Theorem \ref{thm:mixing}, the joint density of the process that corresponds to a single randomization in the first time step, denoted with $r$, calculates to $r = \sum_{ij} \lambda_{ij} p^{ij}$, i.e., $r$ is a mixture of path measures, rather than a mixture of transition functions.
Let $p$ be the path measure of the Markov process with transitions \eqref{mtrans}.
It is trivial to see that $p_k=r_k$, i.e., the marginals at time step $k$ match for both approaches.
Using the definition of $r$ and marginalization, it is also easy to show that $r_{k+1, k}(x_{k+1}, x_k) = p_{k+1, k} (x_{k+1}, x_k)$, i.e., the joint densities for two consecutive time steps match for the two approaches. 
Note that the KL-divergence between two distributions $p$ and $q$ equals 
\begin{equation} \label{KL_decomp}
    \KL(p \| q)  = - \E_p [ \log q]  - h(p),
\end{equation}
where $h(p) = - \int \log p \, \d p $ denotes the differential entropy of $p$.
It is easy to show that $h(p) \geq h(r)$, i.e., randomizing at every time step results in a process with larger entropy than randomizing once:
\begin{align}
    & h(r) =  h(r_0) + \sum_k \E_{x_{0:k} \sim r_{0:k}} \big[h\big(r_{k+1|0:k}(x_{k+1}|x_{0:k}) \big) \big] \nonumber \\
    & \leq  h(r_0) + \sum_k \E_{x_{k} \sim r_k} \big[h \big(r_{k+1|k}(x_{k+1}|x_{k}) \big) \big] \label{cond_entr} \\
    & = h(r_0) + \sum_k \E_{x_{k} \sim r_k} \left[h \left(\frac{r_{k+1,k}(x_{k+1}, x_k)}{r_k(x_{k})} \right) \right] \nonumber \\
    & =  h(p_0) + \sum_k \E_{x_{k} \sim p_k} \big[h\big(p_{k+1|k}(x_{k+1}|x_{k}) \big) \big] \label{two_marginal}\\
    & = h(p) \nonumber
\end{align}
where the notation $x_{0:k}$ denotes the set of random variables $x_0, \dots, x_k$, the inequality in \eqref{cond_entr} holds because conditioning reduces entropy~\cite[Theorem 2.6.5]{cover2012elements}, and \eqref{two_marginal} holds because $r_{k+1, k}(x_{k+1}, x_k) = p_{k+1, k} (x_{k+1}, x_k)$ and $r_k(x_k) = p_k(x_k)$. 

Furthermore, for the maximum likelihood term in \eqref{KL_decomp}, it holds that $\E_p [\log q] = \E_r[\log q]$, since
\begin{align*}
    \E_p [ \log q]  & = \E_{p_0}[\log q_0] + \! \sum_k  \E_{p_{k+1, k}} \Big[ \log q_{k+1|k}(x_{k+1}| x_k) \Big] \\
    & = \E_{r_0}[\log q_0]  + \sum_k  \E_{r_{k+1, k}} \Big[ \log q_{k+1|k}(x_{k+1}| x_k) \Big] \\
    & = \E_r \big[ \log q \big].
 \end{align*}
We therefore deduce that
\begin{equation}
    \KL(p \| q) \leq \KL( r \| q),
\end{equation}
and, therefore, randomizing at every time step has a lower cost in terms of the SB cost function. 

Finally, we note that the process $p$ resulting from the mixture kernel \eqref{mtrans} is the Markovian projection~\cite{gushchin2024adversarial} of the mixture process $r = \sum_{ij} \lambda_{ij} p^{ij}$, that is, $p$ is the solution of  
\begin{equation}
    p = \argmin_{m \in \mathcal{M} } \KL( r \| m), 
\end{equation}
where $\mathcal{M}$ is the set of all Markov processes supported in $\R^{(N+1) \times n}$. 
Due to space limitations, we defer the proof of this result and refer the reader to~\cite{gushchin2024adversarial} for a concise treatment of the Markovian projection for discrete-time processes.

\subsection{Continuous-Time Limit}
In this section, we show that in the limit of small time steps, the discrete-time policy process \eqref{mtrans} converges to the continuous time process of~\cite{rapakoulias2025go}.
To study the convergence of the discrete-time process, we will consider a reference process parameterized by $\Delta T>0$, given by 
\begin{equation} \label{dt_process}
    x_{k+1} = x_k + \Delta T u_t + \sqrt{\epsilon \Delta T} \, w_k, \quad w_k \sim \N(0, I).
\end{equation}
In this setting, consider the setup of Problem \eqref{dtSB}, and study the generator of the process $p$, as $\Delta T \rightarrow 0$. 
Since the state mean and state covariance can be separated, we assume $\mu_0^i = \mu_N^j = 0$ for the moment. 
Using~\cite[Theorem 3]{ito2023maximum}, we can show that the discrete-time optimal process $p^{ij}$ for the $ij$-th component satisfies
\begin{equation*} 
    x_{k+1}^{ij} \hspace{-0.8mm} = \hspace{-0.7mm} \big(\hspace{-0.4mm} I - \epsilon \Delta T (Q_k^{ij})^{-1} \hspace{-0.4mm}\big) x_k^{ij} + \sqrt{\epsilon \Delta T} \big(\hspace{-0.4mm} I - \epsilon \Delta T (Q_k^{ij})^{-1}\hspace{-0.4mm}\big)^\frac{1}{2} w_k^{ij}\!,
\end{equation*}
where $w_k^{ij} \sim \N(0, I)$ and $Q_k^{ij}$ is given by \eqref{Qk}-\eqref{Q0} in Proposition \ref{prp:Gaussian-dtSB} with $k$ replaced by $k \Delta T$ and $N$ replaced by $N \Delta T$. 
The corresponding transition kernel is $p_{k+1|k}^{ij}(x_{k+1} |x_k) = \N(\mu_{k+1|k}^{ij}, \Sigma_{k+1|k}^{ij})$, where 
\begin{align}
    & \mu_{k+1|k}^{ij} = \big(I - \epsilon \Delta T (Q_k^{ij})^{-1}\big) x_{k}^{ij} 
    \label{mu-ij:k+1-k}, \\
    & \Sigma_{k+1|k}^{ij} = \epsilon \Delta T \big(I - \epsilon \Delta T (Q_k^{ij})^{-1}\big).
    \label{sigma-ij:k+1-k}
\end{align}

As $\Delta T \to 0$, the above process converges to the continuous-time optimal process~\cite{chen2018III} for the $ij$-th component, which satisfies
\begin{equation*}
    \d x^{ij}_t = - \epsilon (Q^{ij}_t)^{-1} x^{ij}_t \, \d t + \sqrt{\epsilon} \, \d w^{ij}, 
    \quad t \in [0, T],
\end{equation*}
where $w^{ij}$ is a Wiener process, $Q^{ij}(t)$ satisfies 
\begin{align*}
    Q^{ij}_t &= Q_{0}^{ij} - t \epsilon I,
\end{align*}
and $Q^{ij}_0$ is given by \eqref{Q0} in Proposition \ref{prp:Gaussian-dtSB} with $N$ replaced by $T$.
Thus, the discrete-time optimal SB cost $J_{ij}^{\mathrm{SB}}$ for each $ij$-th component converges to its continuous-time counterpart. 
It follows that the mixing weights $\lambda_{ij}$ also converge to their continuous-time counterparts. 
It follows from \eqref{mflow} that the marginal distribution of the mixed process also converges to its continuous-time counterpart in~\cite{rapakoulias2025go}. 

Given the above, we now study the convergence of the generator of the discrete-time process \eqref{dt_process} and show that it converges to the process defined in~\cite{rapakoulias2025go}. 
Recall from~\cite{rapakoulias2025go} that the continuous-time mixing process is 
\begin{equation} \label{cont-mix}
    \d x_t = \textstyle\sum_{i,j} {u_{t|ij}(x_t) \frac{ \rho_{t|ij}(x_t)\lambda_{ij}}{\sum_{i,j} \rho_{t|ij}(x_t) \lambda_{ij}}} \, \d t + \sqrt{\epsilon} \, \d w_t,
\end{equation}
where $u_{t|ij}(x_t) = - \epsilon (Q^{ij}_t)^{-1} x_t$ and $\rho_{t|ij}(x_t)$ are, respectively, the optimal control and the resulting marginal density of the $ij$-th flow component. 
For notational simplicity, let $\eta_{ij} = \frac{ \rho_{t|ij}(x)\lambda_{ij}}{\sum_{i,j} \rho_{t|ij}(x) \lambda_{ij}}$. 
To this end, consider a sufficiently regular test function $\phi$. 
It is straightforward to see that the infinitesimal generator of \eqref{cont-mix} for the test function $\phi$ is 
\begin{equation*}
    \mathcal{L}(\phi)= - \nabla \phi(x_t)\t \Big(\sum_{ij} \epsilon \eta_{ij} \, (Q_{t}^{ij})^{-1} \Big) x_t 
    + \frac{\epsilon }{2} \tr \big(\hspace{-0.5mm} \nabla^2 \phi(x_t) \big). 
\end{equation*}
Now, compute the limit of the discrete-time generator as $\Delta T \rightarrow 0$. 
We can write the second-order Taylor expansion of $\phi(x_{k+1})$ as
\begin{multline*}
    \phi(x_{k+1}) \approx \phi(x_k) + \nabla \phi(x_k)\t (x_{k+1} - x_k) 
    \\
    + \frac{1}{2} (x_{k+1} - x_k)\t \nabla^2 \phi(x_k) (x_{k+1} - x_k).
\end{multline*}
Furthermore, 
\begin{align*}
    &\lim_{\Delta T\rightarrow0} \frac{ \E [\phi(x_{k+1}) |x_k]- \phi(x_k)}{\Delta T} \\
    & = \lim_{\Delta T\rightarrow0} \frac{ \int \phi(x_{k+1}) p_{k+1|k} \, \d x_{k+1} - \phi(x_k)}{\Delta T} \\
    & = \lim_{\Delta T\rightarrow0} \int \frac{\phi(x_{k+1}) - \phi(x_k)}{\Delta T} p_{k+1|k} \, \d x_{k+1} \\
    & = \lim_{\Delta T\rightarrow0} \int \frac{\phi(x_{k+1}) - \phi(x_k)}{x_{k+1} - x_k} \frac{x_{k+1} - x_k}{\Delta T} p_{k+1|k} \, \d x_{k+1} \\
    & = \lim_{\Delta T\rightarrow0} \int \bigg(\nabla \phi(x_k) + 
    \frac{1}{2} \nabla^2 \phi(x_k) (x_{k+1} - x_k)\bigg)\t
    \nonumber \\
    & \hspace{40mm}
    \times 
    \frac{x_{k+1} - x_k}{\Delta T} p_{k+1|k} \, \d x_{k+1} \\
    & = \nabla \phi(x_t)\t \lim_{\Delta T\rightarrow0} \E\bigg[\frac{x_{k+1} - x_k}{\Delta T} \bigg| x_k\bigg]
    \nonumber \\
    & 
    + \frac{1}{2} \tr \bigg(\hspace{-0.5mm} \nabla^2 \phi(x_t) \hspace{-0.5mm} \lim_{\Delta T\rightarrow0} \hspace{-0.5mm} \E\bigg[\frac{(x_{k+1} - x_k) (x_{k+1} - x_k)\t}{\Delta T} \bigg| x_k\bigg]\bigg).
\end{align*}
In light of the mixing transition kernel \eqref{mtrans}, we define $\gamma_{ij} = \frac{\lambda_{ij} p^{ij}_k(x_k)}{p_k(x_k)}$ for notational simplicity. 
Since the mixing transition kernel $p_{k+1|k}(x_{k+1} |x_k)$ follows a GMM distribution, where the $ij$-th component has density $p_{k+1|k}^{ij}(x_{k+1} |x_k)$ and mixing weight $\gamma_{ij}$, it follows that
\begin{align*}
    \E\bigg[\frac{x_{k+1} - x_k}{\Delta T} \bigg| x_k\bigg]
    & = \frac{1}{\Delta T} \bigg(
    \sum_{ij} \gamma_{ij} \mu_{k+1|k}^{ij} - x_k \bigg)
    \\
    & = -  \Big(\sum_{ij} \epsilon \, \gamma_{ij}\,  (Q_{k}^{ij})^{-1}\Big) x_{k}.
\end{align*}
As $\Delta T\rightarrow0$, we have $k \Delta T \rightarrow t$, $x_{k}^{ij} \rightarrow x^{ij}_t$, and $Q_{k}^{ij} \rightarrow Q_{t}^{ij}$. 
As $\Delta T \rightarrow 0$, we also have $p^{ij}_k(x_k) \to \rho_{t|ij}(x_t)$ and the discrete-time mixing weights $\lambda_{ij}$ converge to their continuous-time counterparts, 
thus $\gamma_{ij} \rightarrow \eta_{ij}$. 
Hence, the first term of the generator becomes 
\begin{multline*}
    \nabla \phi(x_k)\t \lim_{\Delta T\rightarrow0} \E\bigg[\frac{x_{k+1} - x_k}{\Delta T} \bigg| x_k\bigg]
    \\
    = - \nabla \phi(x_t)\t \Big(\sum_{ij}  \epsilon \, \eta_{ij} (Q_{t}^{ij})^{-1}\Big) x_t.
\end{multline*}
Moreover, 
\begin{align}
    &\E\big[(x_{k+1} - x_k) (x_{k+1} - x_k)\t \big| x_k\big]
    \nonumber \\
    & = 
    \sum_{ij} \gamma_{ij} \Big(\Sigma_{k+1|k}^{ij} + \big(\mu_{k+1|k}^{ij} - x_k\big) \big(\mu_{k+1|k}^{ij} - x_k\big)\t\Big) 
    \nonumber \\
    &\hspace{5mm}
    - \Big(\sum_{ij} \gamma_{ij} \big(\mu_{k+1|k}^{ij} - x_k\big)\Big) \Big(\sum_{ij} \gamma_{ij} \big(\mu_{k+1|k}^{ij} - x_k\big)\Big)\t
    \nonumber \\
    & = 
    \sum_{ij} \gamma_{ij} \Big(\epsilon \Delta T \big(I - \epsilon \Delta T (Q_k^{ij})^{-1}\big) \\
    & \hspace{5mm} + \sum_{ij} \epsilon^2 (\Delta T)^2 (Q_k^{ij})^{-1} x_k x_k\t (Q_k^{ij})^{-1}\Big)
    \nonumber \\
    &\hspace{5mm}
    - \epsilon^2 (\Delta T)^2 \Big(\sum_{ij} \gamma_{ij} (Q_k^{ij})^{-1} x_k\Big) \Big(\sum_{ij} \gamma_{ij} (Q_k^{ij})^{-1} x_k\Big)\t.
    \label{2nd-E}
\end{align}
After dividing \eqref{2nd-E} by $\Delta T$ and taking the limit as $\Delta T \rightarrow 0$, all terms containing $(\Delta T)^2$ vanish in \eqref{2nd-E}. 
It follows that 
\begin{equation*}
    \lim_{\Delta T\rightarrow0} \frac{1}{\Delta T}\E\big[(x_{k+1} - x_k) (x_{k+1} - x_k)\t \big| x_k\big]
    =
    \sum_{ij} \eta_{ij} \, \epsilon
    =
    \epsilon. 
\end{equation*}
Thus, the second term of the generator becomes 
\begin{align*}
    &\frac{1}{2} \tr \bigg(\hspace{-0.5mm} \nabla^2 \phi(x_k) \hspace{-0.5mm} \lim_{\Delta T\rightarrow0} \hspace{-0.5mm} \E\bigg[\frac{(x_{k+1} - x_k) (x_{k+1} - x_k)\t}{\Delta T} \bigg| x_k\bigg]\bigg)
    \\
    & = \frac{\epsilon }{2} \tr \big(\hspace{-0.5mm} \nabla^2 \phi(x_t) \big). 
\end{align*}
Therefore, as $\Delta T \rightarrow 0$, the generator of the discrete-time mixing process $p$ converges to the infinitesimal generator of its continuous-time counterpart process in~\cite{rapakoulias2025go}.

\section{Numerical Examples}
Consider two GMMs
\begin{equation} \label{GMMs}
\rho_0 = \sum_{i=1}^{N_1} \alpha^i \N(\mu_0^i, \Sigma_0^i), \quad \rho_N = \sum_{j=1}^{N_2} \beta^j \N(\mu_N^j, \Sigma_N^j),
\end{equation}
with parameters $\{\alpha^i, \mu_0^i, \Sigma_0^i\}_{i=1}^{N_1}, \{\beta^j, \mu_N^j, \Sigma_N^j\}_{j=1}^{N_2}$ representing weights, means and covariances of each component, respectively, and a prior process represented by the LTV system \eqref{dtDS:LTI}. 

\paragraph{Example 1}: For the first example, we let $N=10$, $\epsilon=0.1/N$, $A=B=I, D=\sqrt{\epsilon}$, and solve \eqref{dtSB} and \eqref{dtDS} for $\rho_0, \rho_N$ of the form \eqref{GMMs} with $N_1 = 1, \alpha^i=1, \mu_0^i = [0;  0], N_2 = 8, \beta^{j}=\frac{1}{N_2}, \mu_N^i=5[\cos(\theta_i); \sin(\theta_i)], \, \theta = k \frac{\pi}{8}$ and $\Sigma_0^i = \Sigma_1^j= 0.1 I_2$. We illustrate the result in Figure \ref{fig1}.

\begin{figure}[!ht]
    \centering
    \includegraphics[width=0.9 \linewidth]{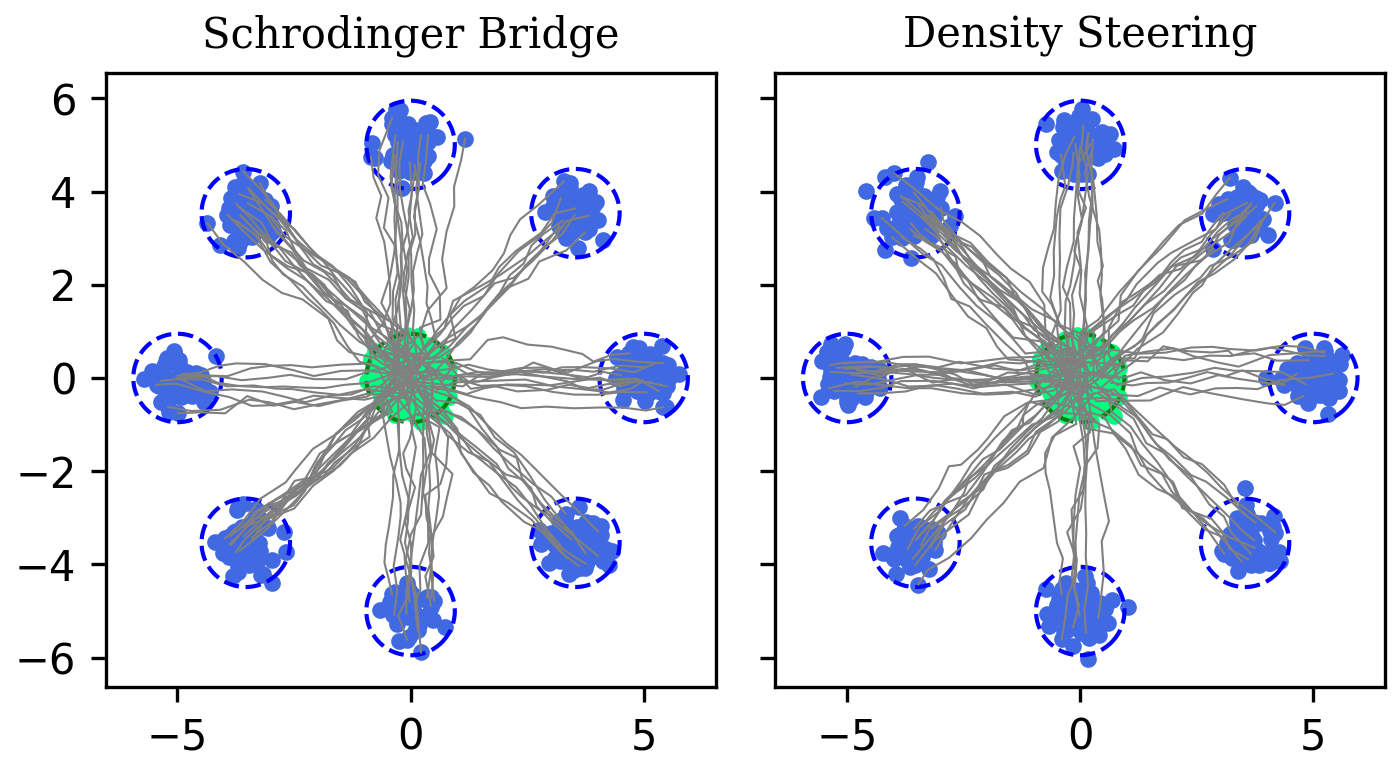}
    \caption{Schrodinger Bridge (left) vs Density steering (right) solutions from initial (green) Gaussian to final (blue) GMM.}
    \label{fig1}
\end{figure}

\vspace{2mm}

\paragraph{Example 2}: For the second example, use $A,B$ for a double integrator~\cite{rapakoulias2023discrete} with a horizon of $N=20$ steps and step size $\Delta T= 1/N$, and solve the discrete-time density steering problem \eqref{dtDS} for $\rho_0, \rho_N$ of the form \eqref{GMMs} with $N_1 = 2, \alpha^i=\frac{1}{N_2}, \mu_0^i = [-5; \{-2,2\}; 20;  0], N_2 = 3, \beta^{j}=\frac{1}{N_2}, \mu_N^i=[5, \{-3, 0, 3\}; 0; 0]$ and $\Sigma_0^i = 0.5 I_4, \Sigma_1^j= 0.2 I_2$. 
We illustrate the result in Figure \ref{fig2}.
\begin{figure}[!ht]
    \centering
    \includegraphics[width=0.8 \linewidth]{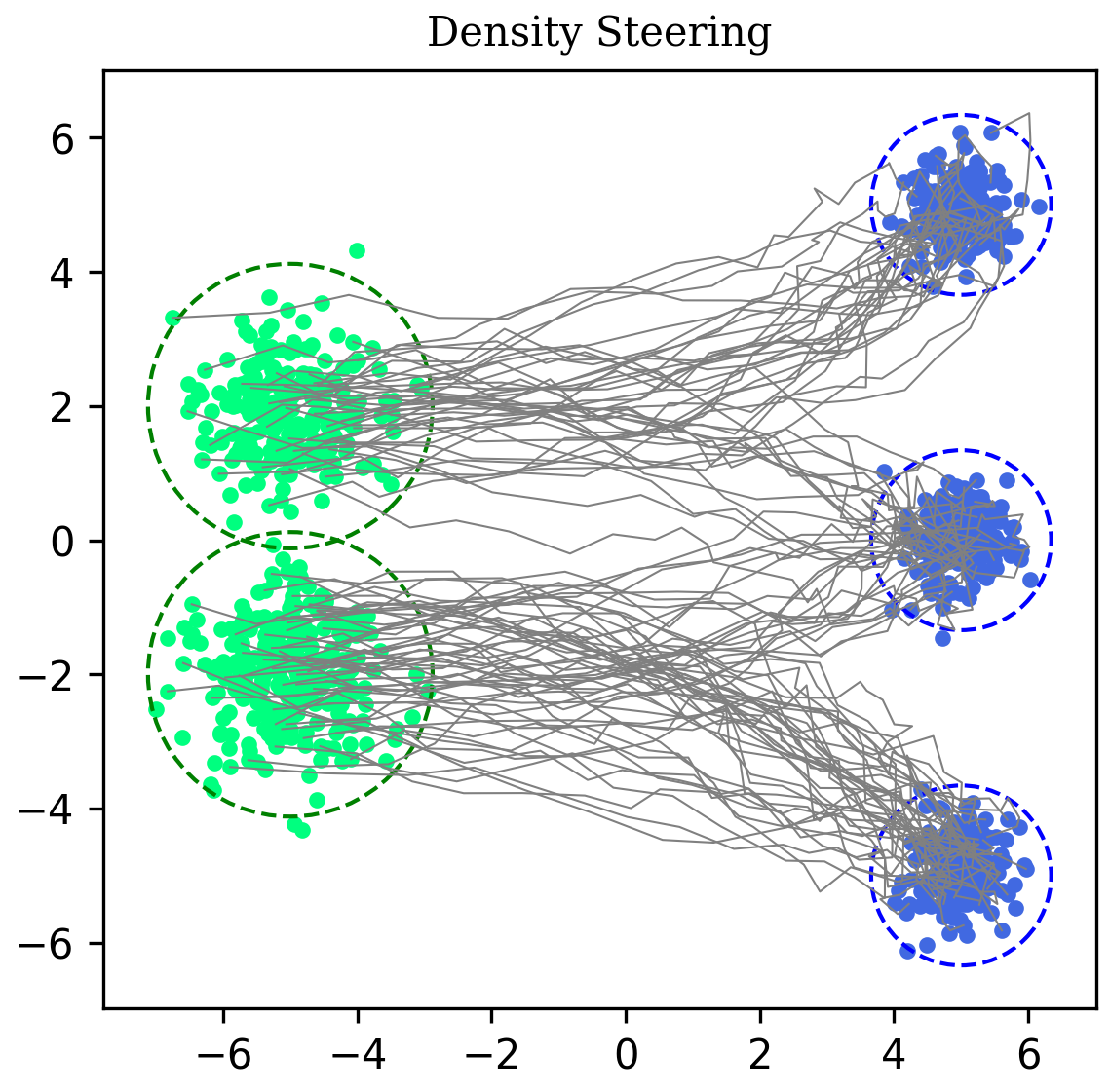}
    \caption{Density steering double integrator dynamics from initial (green) GMM to final (blue) GMM.}
    \label{fig2}
\end{figure}

\section{Conclusions}

To conclude, in this paper, we propose a randomized feedback Markovian policy for approximating the solution of discrete-time Schr\"odinger bridge and density steering problems.
Our method works by first constructing a feasible set of policies, and then approximating the optimal policy within the set by optimizing a tractable upper bound.
We draw connections to existing literature and show that the proposed approach yields a less conservative approximation of SB problems, while achieving the same cost for density control problems, with the additional benefit of being a state-feedback law rather than a controller with memory.
We also study the continuous time limit of the proposed approach, and show that it agrees with a recently proposed continuous-time Schr\"odinger bridge approximation for GMM boundary distributions.
Finally, we illustrate our approach in two examples.

\bibliographystyle{ieeetr}
\bibliography{refs}

@inproceedings{liu2022deep,
 author = {Liu, Guan-Horng and Chen, Tianrong and So, Oswin and Theodorou, Evangelos},
 booktitle = {Advances in Neural Information Processing Systems},
 address={Louisiana, LA},
 pages = {9374--9388},
 publisher = {Curran Associates, Inc.},
 title = {Deep Generalized {Schr\"odinger} Bridge},
 volume = {35},
 year = {2022}
}

@inproceedings{chen2022likelihood,
  title={Likelihood Training of {Schr{\"o}dinger} Bridge using Forward-Backward {SDEs} Theory},
  author={Chen, Tianrong and Liu, Guan-Horng and Theodorou, Evangelos A},
  booktitle={International Conference on Learning Representations},
  year={2022}, 
  month={April},
  address={held virtually}
}

@article{liu2022optimal,
  title={Optimal covariance steering for discrete-time linear stochastic systems},
  author={Liu, Fengjiao and Rapakoulias, George and Tsiotras, Panagiotis},
  journal={Transactions on Automatic Control},
  pages={2289-2304},
  volume={70},
  number={4},
  year={2025}
}

@INPROCEEDINGS{rapakoulias2023discrete,
  author={Rapakoulias, George and Tsiotras, Panagiotis},
  booktitle={62nd Conference on Decision and Control}, 
  title={Discrete-Time Optimal Covariance Steering via Semidefinite Programming}, 
  year={2023},
  volume={},
  address={Singapore},
  pages={1802-1807}}

@article{chen2015optimal,
  title={Optimal Steering of a Linear Stochastic System to a Final Probability Distribution, Part {I}},
  author={Chen, Yongxin and Georgiou, Tryphon T and Pavon, Michele},
  journal={Transactions on Automatic Control},
  volume={61},
  number={5},
  pages={1158--1169},
  year={2015},
  publisher={IEEE}
}

@article{chen2018III,
  title={Optimal Steering of a Linear Stochastic System to a Final Probability Distribution, Part {III}},
  author={Chen, Yongxin and Georgiou, Tryphon T. and Pavon, Michele},
  journal={Transactions on Automatic Control},
  volume={63},
  number={9},
  pages={3112--3118},
  year={2018},
  publisher={IEEE}
}

@article{balci2022exact,
  title={Exact {SDP} formulation for discrete-time covariance steering with {Wasserstein} terminal cost},
  author={Balci, Isin M and Bakolas, Efstathios},
  journal={arXiv preprint arXiv:2205.10740},
  year={2022}
}

@article{bakolas2018finite,
  title={Finite-horizon covariance control for discrete-time stochastic linear systems subject to input constraints},
  author={Bakolas, Efstathios},
  journal={Automatica},
  volume={91},
  pages={61--68},
  year={2018},
  publisher={Elsevier}
}

@inproceedings{balci2023density,
  title={Density Steering of {G}aussian Mixture Models for Discrete-Time Linear Systems},
  author={Balci, Isin and Bakolas, Efstathios},
  booktitle={American Control Conference (ACC)},
  address={Toronto, ON, Canada},
  pages={3935-3940},
  year={2024}
}

@inproceedings{pilipovsky2024computationally,
  title={Computationally efficient chance constrained covariance control with output feedback},
  author={Pilipovsky, Joshua and Tsiotras, Panagiotis},
  booktitle={IEEE 63rd Conference on Decision and Control (CDC)},
  address={Milan, Italy},
  pages={677--682},
  year={2024},
  organization={IEEE}
}

@inproceedings{saravanos2023distributed,
  title={Distributed Hierarchical Distribution Control for Very-Large-Scale Clustered Multi-Agent Systems},
  author={Saravanos, Augustinos D and Li, Yihui and Theodorou, Evangelos A},
  booktitle={Robotics: Science and Systems XIX},
  address={Daegu, Republic of Korea},
  month={July},
  year={2023}
}

@article{chen2021controlling,
  title={Controlling uncertainty},
  author={Chen, Yongxin and Georgiou, Tryphon T and Pavon, Michele},
  journal={Control Systems Magazine},
  volume={41},
  number={4},
  pages={82--94},
  year={2021},
  publisher={IEEE}
}

@article{chen2021stochastic,
  title={Stochastic control liaisons: 
      {Richard Sinkhorn} meets {Gaspard Monge} on a {Schr\"odinger} bridge},
  author={Chen, Yongxin and Georgiou, Tryphon T and Pavon, Michele},
  journal={SIAM Review},
  volume={63},
  number={2},
  pages={249--313},
  year={2021},
  publisher={SIAM}
}

@inproceedings{shi2023diffusion,
  title={Diffusion {Schr{\"o}dinger} bridge matching},
  author={Shi, Yuyang and De Bortoli, Valentin and Campbell, Andrew and Doucet, Arnaud},
  booktitle={Advances in Neural Information Processing Systems},
  volume={36},
  pages = {62183--62223},
  publisher = {Curran Associates, Inc.},
  year={2023}
}

@article{leonard2014survey,
  title={A survey of the {Schr{\"o}dinger} problem and some of its connections with optimal transport},
  author={L{\'e}onard, Christian},
  journal={Discrete \& Continuous Dynamical Systems-A},
  volume={34},
  number={4},
  pages={1533--1574},
  year={2014},
  publisher={American Institute of Mathematical Sciences (AIMS)}
}

@article{terpin2024dynamic,
author = {Terpin, Antonio and Lanzetti, Nicolas and D\"{o}rfler, Florian},
title = {Dynamic Programming in Probability Spaces via Optimal Transport},
journal = {SIAM Journal on Control and Optimization},
volume = {62},
number = {2},
pages = {1183-1206},
year = {2024}
}

@article{chen2018steering,
  title={Steering the distribution of agents in mean-field games system},
  author={Chen, Yongxin and Georgiou, Tryphon T and Pavon, Michele},
  journal={Journal of Optimization Theory and Applications},
  volume={179},
  pages={332--357},
  year={2018},
  publisher={Springer}
}

@article{beghi1996relative,
  title={On the relative entropy of discrete-time {M}arkov processes with given end-point densities},
  author={Beghi, Alessandro},
  journal={IEEE Transactions on Information Theory},
  volume={42},
  number={5},
  pages={1529--1535},
  year={1996},
  publisher={IEEE}
}

@inproceedings{gushchin2024adversarial,
  title={Adversarial {S}chr{\"o}dinger bridge matching},
  author={Gushchin, Nikita and Selikhanovych, Daniil and Kholkin, Sergei and Burnaev, Evgeny and Korotin, Aleksandr},
  booktitle={Advances in Neural Information Processing Systems},
  volume={37},
  pages={89612--89651},
  publisher = {Curran Associates, Inc.},
  year={2024}
}

@inproceedings{kumagai2024chance,
  author={Kumagai, Naoya and Oguri, Kenshiro},
  booktitle={IEEE 63rd Conference on Decision and Control (CDC)}, 
  title={Chance-Constrained {G}aussian Mixture Steering to a Terminal {G}aussian Distribution}, 
  year={2024},
  volume={},
  number={},
  pages={2207-2212},
  address={Milan, Italy},
  doi={10.1109/CDC56724.2024.10886105}}

@article{li2026distributionally,
  author={Li, Yutang and Li, Songzhou and Zhou, Di and He, Zhen},
  journal={IEEE Control Systems Letters}, 
  title={Distributionally Robust {GMM} Steering Under {Wasserstein} Ambiguity Sets}, 
  year={2026},
  volume={10},
  number={},
  pages={13-18},
  doi={10.1109/LCSYS.2026.3657859}}

@article{rapakoulias2025steering,
  author={Rapakoulias, George and Reza Pedram, Ali and Tsiotras, Panagiotis},
  journal={IEEE Control Systems Letters}, 
  title={Steering Large Agent Populations Using Mean-Field {S}chr\"{o}dinger Bridges With Gaussian Mixture Models}, 
  year={2025},
  volume={9},
  number={},
  pages={1760-1765},
  doi={10.1109/LCSYS.2025.3581859}}

@article{balci2025constrained,
  title={Constrained multi-modal density control of linear systems via covariance steering theory},
  author={Balci, Isin M and Bakolas, Efstathios},
  journal={arXiv preprint arXiv:2501.02866},
  year={2025}
}

@inproceedings{rapakoulias2025go,
  title={Go With the Flow: Fast Diffusion for {G}aussian Mixture Models},
  author={Rapakoulias, George and Pedram, Ali Reza and Liu, Fengjiao and Zhu, Lingjiong and Tsiotras, Panagiotis},
  booktitle={The Thirty-ninth Annual Conference on Neural Information Processing Systems},
  year={2025}
}

@inproceedings{kumagai2025hands,
  title={Hands-Off Covariance Steering: Inducing Feedback Sparsity via Iteratively Reweighted $\ell_{1, p}$ Regularization},
  author={Kumagai, Naoya and Oguri, Kenshiro},
  booktitle={IEEE 64th Conference on Decision and Control (CDC)},
  date={December},
  address={Rio De Janeiro, Brazil},
  pages={3560--3565},
  year={2025},
  organization={IEEE}
}

@inproceedings{mei2024flow,
  title = 	 {Flow matching for stochastic linear control systems},
  author =       {Mei, Yuhang and Al-Jarrah, Mohammad and Taghvaei, Amirhossein and Chen, Yongxin},
  booktitle = 	 {Proceedings of the 7th Annual Learning for Dynamics {\&} Control Conference},
  pages = 	 {484--496},
  year = 	 {2025},
  volume = 	 {283},
  series = 	 {Proceedings of Machine Learning Research},
  month = 	 {04--06 Jun},
  publisher =    {PMLR}}

@article{levy1994discrete,
    title = {Discrete-time {Gauss}-{Markov} {Processes} with {Fixed} {Reciprocal} {Dynamics}},
    language = {en},
    journal = {Markov Processes},
    author = {Levy, Bernard C and Beghi, Alessandro},
    year = {1994},
}

@book{cover2012elements,
  title={Elements of Information Theory},
  author={Cover, T.M. and Thomas, J.A.},
  isbn={9781118585771},
  lccn={2005047799},
  url={https://books.google.com/books?id=VWq5GG6ycxMC},
  year={2012},
  publisher={Wiley}
}

@article{ito2023maximum,
  title={Maximum entropy optimal density control of discrete-time linear systems and {S}chr{\"o}dinger bridges},
  author={Ito, Kaito and Kashima, Kenji},
  journal={Transactions on Automatic Control},
  volume={69},
  number={3},
  pages={1536--1551},
  year={2023},
  publisher={IEEE}
}

@inproceedings{mei2025time,
  title={A time-reversal control synthesis for steering the state of stochastic systems},
  author={Mei, Yuhang and Taghvaei, Amirhossein and Pakniyat, Ali},
  booktitle={IEEE 64th Conference on Decision and Control (CDC)},
  pages={1265--1272},
  address={Rio De Janeiro, Brazil},
  year={2025},
  organization={IEEE}
}

@inproceedings{lambert2025lqr,
  title={The LQR-Schr{\"o}dinger Bridge},
  author={Lambert, Marc},
  booktitle={2025 IEEE 64th Conference on Decision and Control (CDC)},
  address={Rio De Janeiro, Brazil},
  pages={3149--3156},
  year={2025},
  organization={IEEE}
}
\end{document}